\def\@ACM@checkaffil{
    \if@ACM@instpresent\else
    \ClassWarningNoLine{\@classname}{No institution present for an affiliation}%
    \fi
    \if@ACM@citypresent\else
    \ClassWarningNoLine{\@classname}{No city present for an affiliation}%
    \fi
    \if@ACM@countrypresent\else
        \ClassWarningNoLine{\@classname}{No country present for an affiliation}%
    \fi
}
\newcommand{\dave}[1]{\textcolor{purple}{}}
\newcommand{\direct}[0] {Directly Harmed Students\xspace}
\newcommand{\indirect}[0] {Indirectly Harmed Students\xspace}
\newcommand{\directHelp}[0] {Directly Helped Students\xspace}
\newcommand{\indirectHelp}[0] {Indirectly Helped Students\xspace}
\newcommand{\Envy}[0] {Envious Students\xspace}
\title[Post-Match Error Mitigation for Deferred Acceptance]{Post-Match Error Mitigation for Deferred Acceptance}
\author{Abraham Gale}
\affiliation{
  \institution{Rutgers, the State University of New Jersey}
  \city{Piscataway}
  \state{New Jersey}
  \country{USA}
}
\email{abraham.gale@rutgers.edu}
\author{Amélie Marian}
\affiliation{
  \institution{Rutgers, the State University of New Jersey}
  \city{Piscataway}
  \state{New Jersey}
  \country{USA}
}
\email{amelie.marian@rutgers.edu}
\author{David M. Pennock}
\affiliation{
  \institution{Rutgers, the State University of New Jersey}
  \city{Piscataway}
  \state{New Jersey}
  \country{USA}
}
\email{david.pennock@rutgers.edu}
\begin{abstract}
Real-life applications of deferred-acceptance (DA) matching algorithms sometimes exhibit errors or changes to the matching inputs that are discovered only after the algorithm has been run and the results are announced to participants. Mitigating the effects of these errors is a different problem than the original match since the decision makers are often constrained by the offers they already sent out. We propose models for this new problem, along with mitigation strategies to go with these models. We explore three different error scenarios: resource reduction, additive errors, and subtractive errors. For each error type, we compute the expected number of students directly harmed, or helped, by the error, the number indirectly harmed or helped, and the number of students with justified envy due to the errors. Error mitigation strategies need to be selected based on the goals of the administrator, which include restoring stability, avoiding direct harm to any participant, and focusing the extra burden on the schools that made the error. We provide empirical simulations of the errors and the mitigation strategies.
\end{abstract}
\begin{document}

\begin{titlepage}
\maketitle


\end{titlepage}

\section{Introduction}
Matching algorithms, like the Deferred Acceptance (DA) mechanism used to assign high-school students to New York City public schools, are designed as batch procedures that take student and school preferences as inputs and produce a single bipartite graph matching students to schools as the output. Even when preference elicitation is iterative, the output of the assignment is at some point finalized and students are sent offers of enrollment. In practice, the process is rarely so clean. Students move in or out. Implemented code may be incorrect.
Schools may shut down or lose capacity after the match. School administrators may make mistakes, for example losing applications, inputting data incorrectly, or inadvertently ranking students too highly by putting undue weight on some criteria. Errors have indeed occurred in large real-world matches, including in the New York City public school system~\cite{WJSError}. Simply re-running the matching algorithm is neither practical nor fair once an assignment has been announced and students and schools have begun planning. Institutions need criteria and procedures for recovering from errors. Restoring trust in the decision processes is one important part of accountability; another is to mitigate the effects of any error that may arise. Any system is bound to have errors that can be due to a variety of reasons: bad modeling, implementation mistakes, or data entry errors. Policymakers should account for the eventuality that their decision systems will sometimes fail and plan for how to recover from the error as best as possible.

A recovery process will start by recalculating the correct decision with corrected inputs. For example, in the school admission error scenario, the first step is to compute the correct match with the errors fixed. The next step is to identify which parties (students) were harmed by the error. We propose solutions to quantify the effects of an error and mitigate its repercussions on affected parties.  

In this paper, we make the following contributions. We
\begin{enumerate}
    \item categorize several different types of errors in school matching, focusing on three: resource reduction, subtractive errors, and additive errors;
    \item categorize the groups of students affected by errors according to direct harm, indirect harm, and envy, we also identify students who were directly or indirectly, unexpectedly helped by the error;
    \item add error-mitigation goals to the list of traditional matching goals;
    \item compute the expected number of students affected in each way by the error using rejection chains;
    \item define practical restrictions on recovering from errors, informing several different choices for mitigation strategies for each type of error; and
    \item analyze simulations to show how these errors and mitigation strategies might work in practice.
\end{enumerate}

\section{Background}

\subsection{Preliminaries}\label{sec:prelim}
The problem of school matching has interested researchers for many decades. One of the first well-documented applications was matching residents to hospitals starting in the 1950s~\cite {roth2003origins,mullin1951internship}.

The literature generally formulates the school matching problem~\cite{abdulkadirouglu2009strategy,abdulkadiroǧlu2020efficiency} as some variation of the following description.

Given 2 groups of actors:
\begin{enumerate}
    \item A finite set of students T, and
    \item A finite set of schools S;
\end{enumerate}
with the following attributes:
\begin{enumerate}
    \item Each student \textit{t} has a (possibly incomplete) strict preference ordering of the schools,
    \item Each school \textit{s} has a strict preference ordering of all students,
    \item Each school has a capacity $t_c$ which is the maximum number of students they can enroll, and
    \item A school $U$ which has infinite capacity and is appended to the end of every preference list indicating that a student is unmatched;
\end{enumerate}
create a bipartite matching of students to schools such that each student is matched to exactly one school, or remains unmatched.\footnote{There are several different ways to incorporate unmatched students into this framework, including using the ``unmatched'' school $U$, or considering them matched to themselves} This matching is chosen to be optimal according to some measures, as discussed below. We will use the notation $t_1 <_s t_2$ to mean that school $s$ prefers student $t_2$ to student $t_2$. The identical notation will be used for school preferences.  


In most real-world solutions to this problem, especially relating to school matching, the algorithm chosen is Deferred Acceptance (DA). DA was famously described by \citet{gale1962college}. 
DA is chosen for two important properties. First, student-proposing deferred acceptance, the type most commonly implemented provides student-optimal stable matches. This means that no student prefers a school to their current school if they have higher priority than any of the school's current students. Second, student-proposing DA is strategy-proof for the students, meaning that expressing true preferences weekly dominates any other strategy \cite{roth1982economics}. Student-proposing DA  is the only stable mechanism that also respects improvements~\cite{balinski1999tale}, which means that a student will always be weakly better off if their ranking improves among one or more schools. In addition, a number of works have proven that, with very general limitations, no strategy-proof mechanism Pareto improves on DA. This applies no matter what tie-breaking rule is used and whether or not there is an outside option~\cite{abdulkadirouglu2009strategy,kesten2010school,kesten2019strategy}. 

Using the previous definitions of school matching, Student Proposing DA outputs a match in which every student $t_n$ is matched to their highest ranked partner with the constraint that in the entire match, there are no \emph{blocking pairs} in the entire match. Note that this may mean leaving some students unmatched

A \emph{blocking pair} is a student and a school who both prefer to abandon their current partner for each other.
When $t_1$ is assigned to $s_1$ and $t_2$ is assigned to $s_2$, $t_2$ and $s_1$ form a \emph{blocking pair} if they satisfy the following two inequalities: $t_2 >_{S1} t_1$ and $s_1 >_{t2} s_2$. Note that $s_2$ may be the special school with infinite capacity appended to the end of every student's list indicating an unmatched student, and equally $t_1$ may simply be an empty seat in $s_1$.

DA, or a version of it, is used for a wide variety of school matching applications including the National Resident Matching Program (NRMP)~\cite{roth2003origins}, the Israeli
“Mechinot” gap-year programs~\cite{gonczarowski2019matching}, and French university admissions~\cite{parcoursup}. It is also used in the public school system of several major cities including Boston and New York~\cite{abdulkadirouglu2005new,abdulkadiroglu2006changing}.

\subsection{Motivations}
\label{sec:motivations}

Deferred Acceptance and other matching algorithms have been tremendously successful at simplifying the complex process of assigning students to school in large systems. However, as is often the case in real-life algorithmic deployments, sometimes the process does not go as planned, due to implementation errors, mistakes when entering inputs, or changes in the system post-match. 

In Spring 2019, the NYC school match process was marred when errors in the ranking
lists of several middle and high schools were discovered~\cite{WJSError} after students had received their assignments. Because NYC  school
admissions use DA, any error in the match has ripple effects that can potentially impact the school assignments of a proportion of the approximately 75,000 students who are matched to high schools (or middle schools) in any given year. University admissions in France in 2019 faced a similar conundrum when an unclear user interface led 400 programs to over-admit students, sometimes by a factor of ten~\cite{ParcoursupError}. French officials had no choice but to rerun the match and retract admission decisions.

Errors are inevitable in large systems. Algorithm and system designers should account for the possibility that their systems may fail and plan for how to recover from the error as best as possible. In this paper, we focus on techniques to recover from various types of errors that may occur in large matching systems. We focus on cases where re-running the matching process from scratch is not possible as decisions have already been made public and retracting them would create undue complications. 
 For instance, in a public housing allocation scenario, families may have moved into their apartments; it would not be fair or ethical to evict them because they had wrongly been assigned the unit. Similarly, students who have received admission offers to schools have already made plans, and should not have their offers retracted. The questions we face are then how to mitigate the impacts of the errors for the parties who were harmed, under the new constraints that some resources have already been assigned, and where to allocate new resources, if needed.

\subsection{Related Work}\label{sec:related}
\label{sec:rejection}

The problem of recovering from the effect of an error in the original match or from a change in resources post-match is closely related to past work on rejection chains in matching markets. Rejection chains study the impact of one rejection on subsequent matches: if a student is rejected from their top choice, they will be matched to one of their lower-ranked choices, taking an available spot from a student who would have been matched to that lower choice had the first student gotten their original match. Each rejection, therefore, creates a chain of rejections. In our setting, when rejections are caused by an error, the rejection chain represents how far the error propagates through the system: for example, when one student is mistakenly accepted to a school, another student will then be mistakenly rejected and will displace another student in a rejection chain. Conversely, the student mistakenly accepted will create an opening in the school to which they should have matched, potentially creating an acceptance chain.

The literature has explored the relationship between the length of a rejection chain and the size of the system.
One of the earliest explorations of this idea was done by \citet{kojima2009incentives} in the context of incentive compatibility. It is well known that student-proposing DA is not incentive-compatible from the school's point of view. They address under what conditions schools are able to manipulate their preferences to get a better match. In that setting, rejection chains are important since a school can only change its outcome by rejecting a student if the rejection chain is long enough to return to the school. This work makes some assumptions that may not apply in our case. For example, as acknowledged in the paper, the conclusions may not apply in a more realistic case where students have correlated preferences by region. However, the general insights of this work are relevant to our setting. Specifically, the authors argue that since there are likely to be many colleges with seats acceptable to students, rejection chains are likely to be relatively short even in large markets.

\citet{ashlagi2014stability} address the problem of stable matching in the hospital-resident scenario with couples searching for complementary matches. Their work uses rejection chains to argue about the length probability of couples' preferences interfering with the possibility of a stable match in a hospital. Along their way to their argument about stability, they bound the number of schools influenced by a rejection chain as growing as $O(\log n \frac{\lambda}{(\lambda - 1)}$) where there are $\lambda n$ positions and $n$ students. Their results are not directly relevant to our setting, since their assumptions, such as uniformly distributed extra positions and one position per hospital, are violated in the school matching markets that we are considering. However, their results provide an intuition as to how long rejection chains are likely to be. 

\citet{ashlagi2017unbalanced} use rejection chains to analyze another problem, this time the advantage given by an unbalanced market (more seats than students or more students than seats). They show, using the length of rejection chains, that an even slightly unbalanced market gives a large advantage to one side. However, as in the previous work, these results do not apply to our setting as a key assumption is that the students' preference lists are random and uncorrelated.  In a non-random market, where there is correlation between students' lists,  rejection chains can easily be at least as long as the number of schools. For example, consider a fully correlated market where all participants agree on the rank of all schools and students. In such a market, any rejection from the top school will lead to a rejection chain the entire length of the market, since each student will be knocked down to the school below until they reach the bottom schools, which are the only ones with seats available. 

\citet{blum1997vacancy} show that regular DA can re-stabilize a market after the addition or subtraction of agents in the one-to-one case, intended to model the retirement of existing workers. \citet{boyle2009sequential} show that when agents enter the market sequentially, the later entrants are better positioned. 

\citet{feigenbaum2020dynamic} deal with a problem inverse to the one in this paper: When some students choose to reject their assigned match and leave the system, there are additional seats available to students post-match. They suggest these wait lists use a reverse of the initial lottery to assign these new seats.

\citet{mai2018finding} deal with the related problem of pre-empting errors in input by choosing a stable match that is likely to remain stable even after an error. They deal with the stable marriage case, where each school is only matched to one student. They give a polynomial time algorithm to maximize the probability of robustness for upward shifts, where an actor is moved up another actor's preference list. Their notion of upward and downward shifts is closely related to our notion of additive and subtractive errors. 

Several recent results show that introducing the possibility of expanding the capacity of schools in DA makes the matching problem both computationally difficult and in some cases impossible to solve fairly. \citet{yahiro2020game} introduce a problem where the school capacities are allocated based on need. Instead of each school having a fixed capacity, the system has resources that can be assigned to each school. Within this framework, they show that no mechanism satisfies basic efficiency requirements while being fair and strategy-proof. \citet{bobbio2022capacity} show that the problem of how to optimally expand the capacity of a school to increase the student match is NP-Complete.

\section{Problem Statement}
\label{sec:problem}

\subsection{Traditional Matching Goals}

School matching algorithms typically focus on several goals: 

\begin{enumerate}
    \item  \label{}\textbf{Student Efficiency}: The algorithm should maximize the welfare of the students, as defined by matching them to a school as high up their preference list as possible.
    \item  \textbf{Fairness}: Students should not be promoted above each other because of factors that are not central to the working of the algorithm.
    \item  \textbf{Stability}: There should be no student-school pair that both prefer each other to any of their current partners.
    \item  \label{}\textbf{School Efficiency}: The algorithm should maximize the welfare of the schools, as defined by matching them to students as high up their preference list as possible.
\end{enumerate}

Traditional matching algorithms must choose which of these goals to prioritize, as some goals are not compatible. For example, Student Efficiency and Stability, or Student Efficiency and School Efficiency are known to be incompatible in many cases. 

When an error occurs, some of the original properties of the matching system will inevitably be lost, as the error introduces noise that impacts the guarantees of the original match. An error mitigation algorithm must consider which properties are worth trying to preserve and which ones can be relaxed.

\subsection{Error Mitigation Goals}
In the error mitigation context, several new goals may be required to address the impact of the error and the constraints of the new situation: 
\begin{enumerate}
    \item \textbf{Do No Harm}: A student who is not directly harmed by the error should not be additionally harmed by the error mitigation. \label{goal:no_harm}
    
    \item \textbf{Offer preservation}: A student given an offer should not have that offer rescinded by the error mitigation strategy (some school systems will not rescind offers by policy). \label{goal:offer}
    \item \textbf{Augmentation Minimization and Fairness}: When new resources need to be added to a system to meet the newly created need, the amount of resources added should be minimized and fairly distributed.
    \item \textbf{Harm Focus}: When the error is caused by a particular party, the harm caused by the error mitigation should be focused on that party.\label{goal:focus}
\end{enumerate}

Error mitigation algorithms must consider and prioritize these new goals to design effective error mitigation solutions.

As in the case of matching goals, no mitigation technique will be able to satisfy all the error mitigation goals in all cases, and choices will need to be made. Additionally, real-world concerns can make meeting these goals difficult. To take a real-world example, in the case when a top school in a particular city caused an error by accepting students out of priority order, new seats were opened in the school to accommodate the students directly harmed by the error. This seemed to be a reasonable solution to minimize the impact of the error without further harming students. However, many of the students who were offered the new seats refused them, since the reduced per-student resources that the school now offered made it now less desirable than their erroneously assigned choice, even though they had preferred before the original match. This meant that these students were now displacing students at the school to which they erroneously matched. This shows that even among students who were not impacted directly by the error at the original school, additional harm was done to them by the error and by the efforts to address it.

\subsection{Affected Students}

To develop a framework in which we can discuss mitigation strategies we first must consider how students were harmed by the error. We focus our analysis on school matching scenarios, where harm reduction for school does not make as much sense: it is critical that students are treated fairly and efficiently; school efficiency is typically less important.

First, we will deal with the \textbf{direct} effects of the error. We will start with Directly Harmed students. Intuitively, these students are harmed as a direct effect of the error, not by the rejection chain caused by other students. 

For the formal definition, we need a concept of who is \textbf{affected} by an error. This will be different for different types of errors and will be discussed for each type of error. We will call the group of those students $D$. Let $s_t$ be the school that student $t$ is matched to in an error-free run of DA. Let  $A_t$ be the school that student $t$ was actually matched to in the run with errors. Both $A_t$ and $s_t$ can be $U$, the school indicating an unmatched student.
\begin{definition}
\label{students:direct} Let a student $t$ be \textbf{Directly Harmed} if $ A_t <_t s_t$ and $t\in D$
\end{definition}

We can similarly define a directly helped student as
\begin{definition}
\label{students:help_direct}  Let a student $t$ be \textbf{Directly Helped} if $ s_t <_t A_t$ and $t\in D$
\end{definition}

We can then move on to the \textbf{indirect} effects of the error.  Intuitively, these students are affected by the rejection or acceptance chains caused by other students.  

We then define indirectly harmed students as follows:
\begin{definition}
\label{students:indirect} Let a student $t$ be \textbf{Indirectly Harmed} if $ A_t <_t s_t$ and $t\notin D$
\end{definition}

We can similarly define an indirectly helped student as
\begin{definition}
\label{students:help_indirect} Let a student $t$ be \textbf{Indirectly Helped} if $s_t <_t A_t$ and $t\notin D$
\end{definition}

Our last group is not a new one, but one previously defined as part of DA. 

\begin{definition}
    \label{students:jealous}
    Let a student $t$ be an \textbf{Envious Student} if they have justified envy of at least one other student. Specifically,  Either $\exists t_i$ s.t.  $t >_{A_{ti}} t_i$ and $A_{ti} >_{t} A_t$. Or alternatively $\exists s_i$ s.t. $s_i$ is not filled to capacity and $s_i >_{t} A_t$
\end{definition}

With no errors DA will not lead to any Envious Students, however, errors can lead to students with justified envy. 

A student is directly harmed by the error and thus a one of the \direct, they will often also be one of the \Envy.

These measures do not include all types of students who might be affected by the error. In particular, it ignores students that are harmed by the error in ways unrelated to their choice list. For example, a student may end up in the same school as in the non-error case but with a weaker peer group due to the error. However, we believe that this list includes the main groups of students that should be considered when fixing an error in practice, so that trust in the matching process, and belief in its fairness, is maintained.

We will compute the expected size of each of these groups in each of the error scenarios considered in this paper.

\subsection{Types of Errors}

We consider three separate types of errors, which each exhibit different behaviors in terms of the students being harmed. 
\begin{enumerate}
    \item \textbf{Resource Reduction}  involves removing a school from the system after the match has occurred. A real-world example of this kind of error is when a school closes after the match but before the school year starts. We define `affected' here as having previously been assigned to the closed school before it closed. Therefore, \direct are the students who were originally matched to the school that closed; these students will also usually be part of the \Envy. Before mitigation, no students other than the \direct fall into any of our categories of affected students. 
    \item \textbf{Subtractive Errors} involve a school (or more) ignoring candidates who should have been considered. A real-world example of this is a school losing a packet of applications and not including them in their ranked list.  We define `affected' here as being one of the students who are ignored by the school and who should have been admitted in an error-free run. In this case, \direct are the students who should have been matched to the school but whose applications were lost, \indirect are the students who should have been matched to the schools to which the \direct were matched (and down the rejection chain). There will also be \indirectHelp. For example, if the school losing the applications has a large applicant pool, some students will be admitted when they should have been rejected in favor of the \direct; these students are \indirectHelp. 
    \item \textbf{Additive Errors} involve errors in the ranked list of some schools that result in some students being ranked higher than they should have been for that school, and therefore being offered a seat ahead of some students who should have been ranked higher.
    A real-world example of this category of error is accidentally miscomputing the score of a subset of students.  We define affected students here as the students who are ranked higher. In this case \directHelp are the students who are ranked higher and attend that school as a result, \indirect are the students who should have been matched to the schools to which the \directHelp were matched (and all other students harmed by the continuing rejection chain). As we will prove below, this leads to an extremely large number of \Envy. 
\end{enumerate}

We now analyze the effect of each error and provide specific error mitigation strategies.

\section{Resource Reduction}
\begin{center}
\renewcommand{\arraystretch}{2}
\begin{table}[]
\begin{tabular}{ |c|c|c|c|}
\hline
&\textbf{Directly Harmed}&\textbf{Directly Helped}&\textbf{Envious}\\
\hline
 \textbf{Resource Reduction}& $C$& 0& $\leq C$ \\ 
 \hline
 \textbf{Subtractive Error}& $C \times p$ & 0&$C \times \frac{p}{1-p}$ \\
 \hline
 \textbf{Additive Error}& $0$ & N & $N + (A-C-N)\times \frac{N}{N+1}$    \\
 \hline
\end{tabular}
 \caption{\label{tab:expected} The expected size of each group in each error scenario before any mitigation is performed}
\end{table}
\end{center}

A post-match resource reduction is caused by a change in the algorithm environment that 
unfairly harms a subset of students. This can happen if, for example, a school closure is decided after the match, or if an incident makes the school building unusable. Students matched to that school lose their match and the possibility of applying to other desired schools as these have already been assigned students through the original match.

We define this error as occurring after running DA over a group of schools $S$, and students $T$. We examine the case when one school, $s_e$, among all schools $S$ closes after the match. We then consider all the students matched to $s_e$ to be unmatched and remove $s_e$ from $S$. All other students and schools remain the same as assigned in the original DA run.

Formally, 
\begin{definition}
    A Resource reduction error occurs by using DA to produce a match on (S,T), then removing one school $s_e$ from $S$. All students matched to $s_e$ are matched to $U$
\end{definition}

\subsection{Size of Each Affected Students Group}

In this case, only the directly affected students (the ones matched to $s_e$ in the original match) are affected at all by the error. As shown in Table~\ref{tab:expected}, this means there are exactly $C$ \direct, where $C$ is the capacity of the error school. 
Since these $C$ are now unmatched, no other student would prefer to take their places, and they do not interfere with the rest of the match. This makes the number of \indirect zero. Since the original match is assumed to have been run using DA, there is no justified envy among the students not directly impacted by the error. Removing this school does not help any students directly or indirectly.

Unless the \direct would have been unmatched in a match run over $S - s_e$, they will also be \Envy, since they will be ranked in at least one school $s$ above at least one other student who currently has a spot, and all students prefer a spot on their list to being unmatched. Therefore, the number of \Envy will be approximately $C$ 
if $s_e$ is a popular school, or slightly smaller if $s_e$ is not a popular school (and thus some \direct would have been unmatched had they not matched to $s_e$).

\subsection{Mitigation Model}

As discussed above, offer preservation means that there is a strong preference against moving students when they are already assigned to a school. In this case, where all the harmed students are not assigned to any school, we can minimize the impact on \direct by only moving the \direct without worrying that they would prefer their new erroneous assignment. 

We will consider the following restrictions to address resource reduction:
\begin{enumerate}
    \item No student who was not displaced should be moved during mitigation \label{goal:stable_expansion:move} (\textbf{Offer preservation})
    \item The post-mitigation match should be stable \label{goal:stable_expansion:stable}  (\textbf{Stability})
    \item All students should either be matched to a school on their preference list, or remain unmatched. 
    \label{goal:stable_expansion:prefs}  (\textbf{Student efficiency})
\end{enumerate}
Restriction \ref{goal:stable_expansion:stable} aligns well with the preferences of the system designers given their pre-error choices. By choosing deferred acceptance, they showed their inclination toward stability over maximizing Pareto-efficiency.  Restriction \ref{goal:stable_expansion:prefs} is likewise simply a restriction of the original match. Restriction \ref{goal:stable_expansion:move} is a natural outgrowth of the overall error correction goal to minimize the disruption to the system as a whole, and not harm any further students. Additionally, in real-world systems, there is a concern that a currently matched student may change their preferences after the error and prefer to not move, which can cause even seemingly beneficial changes to upset some students.

Note that none of these
restrictions prevent the addition of extra seats at any school. Adding resources in this case may be necessary due to the loss of resources caused by the original error.

 This problem is almost identical to the goal of the original DA algorithm: find the student-optimal stable match. The difference is that we no longer consider the capacities of the schools as fixed so that the system can absorb students impacted by the error, and that we are bound by all offers already given to other students.

Formally, we define the problem for resource reduction is to produce a match that, like the original DA match, is a student-optimal stable match. 
 \begin{definition}
 \label{prob:resource_reduction}
      A solution to the resource reduction problem should produce a matching from S to T such that each student $t_n$ is matched to their best stable partner maintaining all matches not dissolved by the closing of the school and allowing any school to expand its capacity.  
 \end{definition}
 This definition is saying that the goal is to replicate DA as far as possible while maintaining all previous offers and sacrificing capacity limitations.
\subsection{Stable Expansion}
\begin{algorithm}
\caption{Stable Expansion Algorithm}\label{alg:stable_expansion}
$D \gets \direct$\;
\For{$t \in D$}{
    \For{$0 < i < Preference List_t.length$}{
    $s \gets Preference List_t[i]$ \;
    \If{$t >_{s} Br_{s}$}{
      Match(t,s)\;
      Break\;
      
    }

  }
}

\end{algorithm}
For this error case, the problem we laid out in Section \ref{prob:resource_reduction} can be solved. We call the algorithm that finds this student-optimal stable match {\em Stable Expansion}. {\em Stable Expansion} is a modification of DA that maximizes student welfare subject to the restrictions above. To achieve this, {\em Stable Expansion} ensures that no \direct are left with justified envy, by guaranteeing that all \direct get a seat in any school on their list for which they rank higher than any student rejected by the school.

Let the best rejected student in $S$ be $Br_{S}$, or the highest-ranked student not in \direct that applies to school $S$ during deferred acceptance and is rejected. If no student outside of the \direct is rejected from $S$, this is a hypothetical student who is ranked lower than all real students by every school. Intuitively, this is the student with the most cause for justified envy if someone else is admitted. Let $a >_{s1} b$ iff $s_1$ ranks student $a$ above student $b$. {\em Stable Expansion} is then shown in Algorithm \ref{alg:stable_expansion}. Note that {\em Stable Expansion} does not take into account the capacity of any school; schools that are full are treated the same as schools that have space.

If the original match is done using DA, {\em Stable Expansion} meets all our restrictions, and finds the student-optimal stable match. Stable Expansion clearly meets Restriction \ref{goal:stable_expansion:move}, since no student who was not affected is even considered by the algorithm when deciding who to move. It also clearly only matches students to schools on their preference list meeting Restriction~\ref{goal:stable_expansion:prefs}, since only schools on their preference list are chosen. As we will now prove, given our constraints, {\em Stable Expansion} meets Restriction~\ref{goal:stable_expansion:stable} by producing the student-optimal stable match.
\begin{lemma}
    Stable Expansion produces a stable match.
\end{lemma}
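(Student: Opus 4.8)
The plan is to prove the contrapositive directly: assume some pair $(t,s)$ blocks the output match $\mu$ of \emph{Stable Expansion} and derive a contradiction, checking both clauses of the justified-envy condition (Definition~\ref{students:jealous}). First I would pin down the structure of $\mu$. Because \emph{Stable Expansion} never touches students outside $D$ and only ever adds $D$-students to schools (ignoring capacity), every $t\notin D$ keeps its original DA partner $s_t$, while every $t\in D$ is matched to the highest school $s$ on its list with $t>_s Br_s$ (or to $U$ if none exists). The single fact I would lean on throughout is the standard property of student-proposing DA: if a student $t$ strictly prefers a school $s$ to the partner DA assigns it, then $t$ was rejected by $s$ during the run, so every student $s$ holds at termination is ranked by $s$ strictly above $t$. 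By definition $Br_s$, the best rejected non-$D$ student, therefore sits weakly below the entire accepted cohort of $s$ and weakly above any other non-$D$ student that $s$ rejected.

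Next I would rule out a displacement block (the first clause of Definition~\ref{students:jealous}) by showing that in $\mu$ every student assigned to $s$ is ranked above $t$, splitting on whether the blocking student $t$ lies in $D$. If $t\notin D$, then $t$ prefers $s$ to $s_t$, so $t$ was rejected by $s$ and $Br_s\ge_s t$; the surviving cohort of $s$ (a subset of its DA cohort) already outranks $t$, and any $D$-student that \emph{Stable Expansion} added to $s$ satisfies the inequality $>_s Br_s\ge_s t$, hence outranks $t$ as well. If instead $t\in D$, then $t$ prefers $s$ to $\mu(t)$, yet $s$ lies above $\mu(t)$ on $t$'s list and was skipped by the algorithm; the only possible cause is $t\not>_s Br_s$, i.e.\ $Br_s>_s t$. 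Since every student in $\mu(s)$ is ranked above $Br_s$ (the DA cohort strictly above it, the added $D$-students above it by construction), all of them again outrank $t$. In either case $s$ prefers none of its current students to $t$, contradicting the first clause.

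Finally I would dispatch the second, empty-seat clause. A school $s$ is below capacity in $\mu$ only if it was below capacity in the original DA match, since \emph{Stable Expansion} only adds students; but in DA a school's tentative roster is non-decreasing in size, so a school that finishes below capacity never rejects anyone, forcing $Br_s$ to be the hypothetical bottom student ranked below all real students. Then for $t\in D$ the inequality $t>_s Br_s$ holds trivially, so $s$ would have been chosen over the strictly worse $\mu(t)$ had $t$ preferred it, a contradiction; and for $t\notin D$, a student preferring an under-capacity $s$ to $s_t$ would have been accepted by $s$ during DA rather than rejected, again a contradiction. Hence no pair blocks $\mu$. I expect this last step to be the main obstacle: the capacity-expansion semantics make ``not filled to capacity'' subtle, and the argument must show that the only schools left with free seats are exactly those that never rejected any student, so that the bottom-student convention for $Br_s$ forecloses empty-seat envy.
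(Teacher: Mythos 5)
Your proof is correct and takes essentially the same route as the paper's: a case split on whether the envious student belongs to $D$, with the inequality chain $t_1 >_{s} Br_{s} \geq_{s} t_2$ (for added $D$-students versus DA-rejected outsiders) doing all the work, exactly as in the paper's three-case analysis. If anything, your explicit handling of the empty-seat clause — showing that any school left under capacity never rejected anyone in DA, so its $Br_s$ is the hypothetical bottom student — is more careful than the paper, which folds that situation into its cases implicitly.
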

\begin{proof}
Instability requires a \emph{blocking pair} of a student and school who both prefer to abandon their current partner for each other. 
When $t_1$ is assigned to $s_1$ and $t_2$ is assigned to $s_2$, $t_2$ and $s_1$ form a \emph{blocking pair} if they satisfy the following two inequalities: $t_2 >_{s1} t_1$ and $s_1 >_{t2} s_2$. Note that $s_2$ may be the special school with infinite capacity indicating an unmatched student, and equally $t_1$ may simply be an empty seat in $s_1$. 

To prove the stability of {\em Stable Expansion}, we consider which group of students $t_1$ and $t_2$ belong to. $D$ will refer to the set of students matched by DA to the closed school ($s_e$). 

There are then three cases we will consider to cover every possibility.

\begin{enumerate}
    \item  $t_1,t_2 \notin D$
    \item $t_2 \in D$
    \item $t_1 \in D, t_2 \notin D$
\end{enumerate}
When $t_1,t_2 \notin D$, there is no instability since they were matched using DA and were not moved by {\em Stable Expansion}. 

If $t_2 \in D$, there can be no blocking pairs. This is true since {\em Stable Expansion} does not consider capacity. The lack of consideration for capacity implies that if $t_2 >_{s1} t_1$ and $s_1 >_{t2} s_2$, $t_2$ will always be matched to $s_1$ or better.  

Lastly, if  $t_1 \in D, t_2 \notin D$, we must consider that the original match was done with student-proposing deferred acceptance. Since students propose in order of preference, and $t_2$ is not matched to $s_1$, $t_2$ was originally rejected from $s_1$. This means that $Br_{s1} \geq_{s1} t_2$.
$t_1$ was matched to $s_1$ by {\em Stable Expansion}, implying $t_1 >_{s1} Br_{s1}$. These two inequalities imply through the transitive property that $t_1 >_{s1} t_2$, contradicting one of the two blocking pair inequalities, $t_2 >_{s1} t_1$. Therefore $t_1$ and $t_2$ do not form a blocking pair. These cases cover all possibilities, so there are no blocking pairs, and {\em Stable Expansion} produces a stable match.
\end{proof}
\begin{theorem}
Stable Expansion produces the student-optimal stable match, when constrained by the original offers for all non-displaced students. 
\end{theorem}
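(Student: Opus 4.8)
The plan is to build directly on the preceding lemma, which already establishes that \emph{Stable Expansion} outputs a stable match. What remains is optimality: I must show that, among all matches satisfying the three restrictions (Offer preservation, Stability, Student efficiency), every student receives their best stable partner. I would split the students into the displaced set $D$ (those matched to $s_e$ in the original run) and its complement, and argue about each separately.

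For the non-displaced students the argument is immediate. The Offer-preservation restriction forbids moving any $t \notin D$, so every feasible match leaves each such student with its original DA partner $s_t$. Their assignment is therefore identical across all feasible stable matches and is trivially optimal; the entire content of the theorem concerns the students in $D$.

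For $t \in D$, I would show that \emph{Stable Expansion} assigns $t$ the most preferred school on its list at which $t$ can be placed without creating a blocking pair. The algorithm walks down $t$'s preference list and stops at the first school $s$ with $t >_{s} Br_{s}$. For any school $s'$ that $t$ strictly prefers to its assigned school, the loop must have skipped $s'$, so $Br_{s'} \geq_{s'} t$. Since preferences are strict and a skip rules out the case where $Br_{s'}$ is the hypothetical bottom student (in which case $t >_{s'} Br_{s'}$ would have triggered a match), $Br_{s'}$ must be a genuine student who applied to $s'$ during DA and was rejected. Hence $Br_{s'} >_{s'} t$, and because $Br_{s'}$ was rejected from $s'$ it strictly prefers $s'$ to its own match, giving $s' >_{Br_{s'}} s_{Br_{s'}}$. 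As $Br_{s'} \notin D$ keeps $s_{Br_{s'}}$, the pair $(Br_{s'}, s')$ blocks any match that places $t$ at $s'$: the block is witnessed by $Br_{s'}$ being preferred by $s'$ over the displaced student $t$ now sitting there, and this holds even with capacity expanded. Therefore no feasible stable match can give $t$ a school it prefers to its \emph{Stable Expansion} assignment.

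Finally I would observe that, because capacities may be freely expanded, the students of $D$ do not compete with one another: each $t$'s best feasible school is pinned down solely by the $Br_{s}$ values determined by the original DA run over the non-displaced students, independently of where the other members of $D$ land. Consequently every student attains their individual best feasible partner simultaneously, which is precisely student optimality. The main obstacle I anticipate is the blocking-pair bookkeeping: making the witness argument airtight when the displaced student's current partner is an empty (expanded) seat or the unmatched school $U$, and cleanly ruling out the degenerate $Br_{s'}$ so that a skipped school always yields a real blocking student. The companion corner case to verify is the student who would have been unmatched in a DA run over $S - s_e$, ensuring the optimality claim still reads correctly when the best feasible partner is $U$.
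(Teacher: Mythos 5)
Your proposal is correct and follows essentially the same route as the paper's proof: the key step in both is that any school $s'$ a displaced student prefers to their \emph{Stable Expansion} assignment was skipped because $Br_{s'} >_{s'} t$, and since $Br_{s'}$ was rejected from $s'$ during the original DA run and (being non-displaced) keeps its match, the pair $(Br_{s'}, s')$ blocks any match placing $t$ at $s'$. The only differences are presentational—you argue directly per student rather than by contradiction against a hypothetical better algorithm, and you make explicit the degenerate-$Br$ case and the non-competition of displaced students under expanded capacities, which the paper leaves implicit.
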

\begin{proof}
Consider an algorithm that produces a better outcome for students called $SE_b$. Since no student outside of the \direct can be moved, $SE_b$ must move at least one of the \direct to a school higher on their preference list than in the match produced by regular $SE$. Let this student be $t_b$ and the regular $SE$ school $s_r$. Let $s_b$ be the new school which is higher on the preference list that $t_b$ is not matched to by $SE$ but is matched to by $SE_b$. Let $Br_{sb}$'s $SE$ match be $s_c$. By the definition of {\em Stable Expansion}, $Br_{sb} >_{sb} t_b$, since $t_b$ is not matched to $s_b$ by regular {\em Stable Expansion}, and $t_b$ proposes in order of preference. $Br_{sb}$ is, by definition of $Br$, rejected from $s_b$ during deferred acceptance, this gives us $s_b >_{BrSh} s_c$. This means that when $SE_b$ matches $t_b$ to $s_b$, $Br_{sb}$ and $t_b$ and $s_b$ form a blocking pair. This means that any algorithm that produces a better outcome for students than $SE$ is not stable. Therefore {\em Stable Expansion} produces a student-optimal stable match.
\end{proof}

\section{Subtractive Error}

Subtractive errors model cases where some participants were not given full consideration in the original match, for instance when a school inadvertently loses some students' applications. Formally, we define this type of error as removing a set of students from the preference list of the error school $s_e$. For our analysis, we assume that each student is removed with probability $p$. We then run deferred DA, with $s_e$ rejecting any student not on its preference list.

\subsection{Size of each group}
For all group size estimations for this group, we assume that the school remains competitive, i.e.,  even after the applicant removal is performed, there are enough remaining applicants to $s_e$ that are still on the preference list of $s_e$ to fill the school to capacity. (If there are not enough applicants to fill the school, a trivial error mitigation strategy would just assign the \direct from lost applications to the school.)  

As shown in Table~\ref{tab:expected}, the number of \direct is simply the number of applicants who would otherwise have been admitted if their applications had not been lost. In probabilistic terms, this is simply the proportion $p$ of lost applications multiplied by $C$, the number of seats in $s_e$.

The number of \indirect for this case depends entirely on the length of the rejections chains. As discussed in Section \ref{sec:rejection}, this will change depending on the particular market conditions. We will give some empirical results in the simulations section.

In this case, there are no \directHelp. There will be at least some \indirectHelp in the form of students who are admitted to the error school instead of the \direct. There also may be more \indirectHelp, if removing these students begins an acceptance chain. In other words, the number of \direct gives a lower bound on the number of \indirectHelp since, with our competitiveness assumption, each student in \direct is replaced with a new accepted student.

\Envy include both \direct and students who were part of the error (e.g., their application was lost) but whose direct outcome was not changed by the error. However, they are now seeing students who ranked lower in $s_e$ being offered a match to $s_e$ ahead of them. Despite the fact that they themselves would not have matched to $s_e$, they now have justified envy. The expected number of \Envy can be directly computed.  Consider each student on the ranked list to be a single trial with a probability of failure equal to $p$, where failure indicates a student who would normally have been admitted but whose application was lost. Each of these students will now be \Envy. The number of these students is simply the number of trials taken to admit $C$ students in school $s_e$. This is exactly what the negative binomial distribution gives. Using the mean of the negative binomial distribution tells us that the Expected number of \Envy is:
$$
E(Subtractive\_Error.Envious) = C\frac{p}{1-p}
$$

Note that this number is strictly larger than the number of \direct since all these students are also \Envy. 
 
While this may look large at first, for reasonable numbers of lost applications, the number of \Envy is actually relatively reasonable. For example, as long as $p < 0.5$ we should expect that the number of \Envy is no more than twice the number of \direct. Additionally, the number of harmed students does not depend on the size of the applicant pool, only the size of the school. This is important for errors in extremely competitive schools. For example, some colleges have admissions rates under 5\% so even significant increases in class sizes cannot accommodate admitting a significant fraction of the applicant pool.

One interesting point about subtractive errors is that the number of \direct and \Envy are completely independent of the rest of the system. The only requirement for these numbers to hold true is that there are enough applicants for the error school to fill up. With that caveat, a system with thousands of schools and tens of thousands of students behaves the same way as a system with 10 schools and hundreds of students. Assuming this minimum level of popularity of $s_e$, there is also no need to consider the student preference lists when computing this expected size.

\subsection{Mitigation}
\subsubsection{Mitigation Model}
In this error case we will consider two  restrictions drawn from real-world limitations:
\begin{enumerate}
    \item All students should be given the option to remain in their currently assigned school, regardless of their original preferences. \label{goal:lost:move} (\textbf{Offer preservation})
    \item Only one round of negotiation is permitted; that is, each student will receive one proposal which they can either reject or accept.\label{goal:lost:one}
    \end{enumerate}
Restriction \ref{goal:lost:move} is a straightforward expansion of Goal \ref{goal:no_harm}, Do No Harm and Goal~\ref{goal:offer}, Offer preservation. We do not hold students to their previously expressed preferences so far as to move them without their consent because it is likely that their feelings may have changed after the original match. This change can be due to either psychological effects such as loss aversion, or more concrete ones such as signing a lease or the desire to avoid a school dealing with the fallback of an error (e.g., overenrollment). Restriction \ref{goal:lost:one} is based on real-world constraints and resolving the error in a reasonable amount of time. If students are already subject to upheaval due to the late-breaking error, we should not subject them to further upheaval by extending the window of uncertainty beyond the necessary.

\subsubsection{Mitigation Options}
Unlike in the previous case, there is no clear solution that reasonably helps all harmed students. Depending on which groups of students are most critical, different error-reduction techniques should be chosen. 

We will consider three different algorithms:
\begin{enumerate}
    \item Direct only
    \item Stability Restoration
    \item Best-of-Both
\end{enumerate}

This first option is to directly maximize welfare by only moving the directly affected students, as we do in {\em Stable Expansion}. This is the approach that was used for to address the errors in the NYC match~\cite{WJSError}. Specifically, \direct were given the option to attend the error school $s_e$. While this solution seems reasonable, it does come with drawbacks:  while this means that no \direct will be directly harmed by the error anymore, there are now two other groups to consider. Both \indirect and \Envy now are harmed in a way that is not addressed by this mitigation strategy. Second, this strategy causes the entire burden of error mitigation to fall on $s_e$. While this is good from the perspective of harm focus (Goal~\ref{goal:offer}), it causes a very large burden on school $s_e$. This burden can affect the desirability of the school to the very students we are trying to help by reducing the per-student resources. 

Another option is {\em Stability Restoration}. {\em Stability Restoration} attempts to help the \Envy, which includes \direct. One way to arrive at this algorithm is to consider is who is likely to complain. In situations where the admission functions are public, \Envy and \direct can easily find out that they were harmed by comparing themselves to a single friend with lower scores who was admitted. This is how the NYC error~\cite{WJSError} was actually detected. In contrast, \indirect would only find out they were harmed if they had a deeper knowledge of the system, and of the reach of the rejection chains.

In order to help \Envy, the mitigation must restore stability. In effect, this is saying that the error school will honor the implied admissions threshold created by the error, and will  admit any student who scores higher than the lowest scoring admitted student. A major advantage of this approach is that, as before, the only required change is admitting additional students to the school that made the error (\textbf{Harm Focus}). However, this creates an even larger burden on the $s_e$ than only admitting \direct since \direct $\subset$ \Envy, but not as much larger as we might expect. If we look at Table~\ref{tab:expected} we see that for reasonably sized errors, the difference between the expected expansion is minimal. For example, if 20\% of the applicants have their application lost, admitting only \direct would mean admitting 1.2 times the original number of students on average, while admitting \Envy, which includes admitting \direct, would mean admitting 1.25 times the original number of students on average. This slight difference may be worth it for the benefits that stability can bring. 

It is also important to note that, in this section, it would be more precise to talk about {\em Justified Envy freeness} than stability. If a student leaves a school other than $s_e$ in order to take a newly created spot in $s_e$, this may leave an open spot that another student is interested in. Given our restriction on rounds of negotiation, any error mitigation strategy will have this kind of instability. We, therefore, use No-Justified-Envy in the rest of this section instead of stability, since stability will be impossible for all reasonable mechanisms.

Another option is to focus on \direct and \indirect since they suffered actual harm. This can be done by simply offering each of them the best of two matches: their original match with the error in $s_e$ list, and the match they would have received if the error had not happened. This strategy, best-of-both ({\em BoB}), has the advantage of helping all students who suffered actual harm without increasing the number of \Envy relative to the original error scenario. 
\begin{theorem}
    Best-of-Both error mitigation does not increase the number of students with justified envy.
\end{theorem}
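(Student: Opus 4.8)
The plan is to construct an injection from the students left with justified envy after \emph{BoB} into those who already had justified envy in the error run, which immediately bounds the count. First I would fix notation: let $B_t$ denote the school \emph{BoB} assigns to $t$, namely the $t$-better of $A_t$ and $s_t$, so that $B_t \geq_t A_t$ and $B_t \geq_t s_t$ for every student. Since harmed students (those with $A_t <_t s_t$) receive $s_t$ while everyone else keeps $A_t$, the post-mitigation assignment is exactly the error-free match with the \indirectHelp promoted up to their (strictly better) error seats. I would rely on two facts already in hand: the error-free match is the output of DA and hence stable, so it has \emph{no} justified envy of any kind; and it agrees with the error match except along the rejection/acceptance chains that emanate from $s_e$. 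It is convenient to split envy in the \emph{BoB} match according to the two clauses of Definition~\ref{students:jealous}: envy of an \emph{occupied} seat, and envy of a seat left below capacity.

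The occupied-seat clause is the clean core of the argument. Suppose $t$ envies a school $s$ after mitigation through a lower-ranked occupant $t_i$, so that $t >_s t_i$, $s >_t B_t$, and $t_i$ is assigned $s$. From $B_t \geq_t s_t$ we get $s >_t s_t$, and stability of the error-free match then forces every student it places at $s$ to be ranked by $s$ above $t$. Hence the below-$t$ student $t_i$ was \emph{not} moved into $s$ by the mitigation; he must have kept his error seat, i.e.\ $A_{t_i}=s$, so $t_i$ was already at $s$ in the error run. Together with $t_i <_s t$ and $s >_t A_t$ (which follows from $s >_t B_t \geq_t A_t$), this shows $t$ already envied $s$ under the error. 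Thus occupied-seat envy can only shrink; in particular the \direct, who envied $s_e$ before and now return to it, drop out entirely.

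The below-capacity clause is the genuinely delicate case and, I expect, the main obstacle. Sending each harmed student back to $s_t$ empties its error seat, and clause~(b) would then flag every student preferring an emptied school. The plan is to neutralize these with the chain structure: the harmed students are exactly the rejection chain out of $s_e$, and along it each student's error seat is the rightful seat $s_t$ of the next student in the chain, so when they all return they refill one another's vacancies and the only residual openings coincide with seats already vacant in the error-free match --- seats that, by stability and $B_t \geq_t s_t$, no one envies. The obstruction is the case where this chain does not drain into a slack seat but closes into a cycle through a competitive school, so that a \indirectHelp has vacated a sought-after seat subsequently taken by a harmed student; after mitigation that seat is truly empty yet was full (and unenvied) in the error-free match. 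Here one must charge the resulting below-capacity envy injectively back to the \direct at the cycle's origin and argue it is dominated by the envy already present at $s_e$, which is exactly where the competitiveness assumption on $s_e$ and the paper's observation that one-round mitigation unavoidably leaves such openings for every strategy have to do the real work.
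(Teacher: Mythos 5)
Your treatment of the occupied-seat clause is correct, and it takes a genuinely different route from the paper's proof. The paper argues by cases: it writes down the blocking-pair conditions and splits into four cases according to whether the occupant $t_1$ and the envious student $t_2$ lie in $D^-$ or $D^+$, checking in each case that no new blocking pair can appear (unmoved pairs are unchanged; a student moved up their list has fewer schools left to envy; two students both sitting at their error-free seats cannot block, by stability of DA). Your argument instead produces a single injection: if $t$ envies an occupant $t_i$ of $s$ after {\em BoB}, then $B_t \geq_t s_t$ gives $s >_t s_t$, stability of the error-free match forces every error-free occupant of $s$ to outrank $t$ at $s$, hence the lower-ranked $t_i$ cannot have been moved into $s$ by the mitigation and must satisfy $A_{t_i} = s$; combined with $s >_t B_t \geq_t A_t$ this says $t$ already envied $t_i$ in the error run. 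This is tighter than the paper's version: it proves the stronger pointwise fact that each student's set of envied occupied seats only shrinks, and it makes explicit a step the paper glosses over (the paper's second case dismisses envy toward schools whose occupants were changed by the mitigation in one sentence; your stability argument is what actually closes that subcase).

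The below-capacity clause is where your proposal has a gap, and the gap cannot be repaired: under the full Definition~\ref{students:jealous}, the claim is false, and your own ``cycle'' scenario is the counterexample. Take $s_e$ and $s_1$ of capacity $1$; let $w$ and $u$ both prefer $s_e$ to $s_1$, let $t$ and $t'$ list only $s_1$, let $s_e$ rank $w$ above $u$, and let $s_1$ rank $u >_{s_1} w >_{s_1} t >_{s_1} t'$. Error-free, $w$ sits at $s_e$, $u$ at $s_1$, and nobody is envious. If $s_e$ loses $w$'s application, then $u$ moves to $s_e$, $w$ moves to $s_1$, and the only envious student is $w$. {\em BoB} returns $w$ to $s_e$ (expanding it) and leaves $s_1$ empty, so $t$ and $t'$ both acquire under-capacity envy: two envious students where there was one, and the count grows with the number of low-ranked students listing $s_1$. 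This also shows why charging the new envy ``injectively back to the \direct at the cycle's origin'' cannot work: arbitrarily many students can envy a single vacated seat, so no injection into previously envious students exists. The paper never confronts this case because it excludes it by convention: the paragraph preceding the theorem states that under one round of negotiation every mitigation strategy leaves open seats of this kind, and for that reason replaces stability by student-to-student justified-envy-freeness; accordingly, the paper's proof only ever considers blocking pairs between two students. Read against that intended claim, your occupied-seat injection is already a complete proof, and the clause you were struggling to rescue is not part of what the theorem asserts.
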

\begin{proof}
    To prove we do not increase the number of \Envy, recall the definition of instability from before: When $t_1$ is assigned to $s_1$ and $t_2$ is assigned to $s_2$. A blocking pair satisfies the following two inequalities: $t_2 >_{s1} t_1$ and $s_1 >_{t2} s_2$. 

 We now have two sets of affected students, $D^+$ and $D^-$ where $D^+$ was helped by the error and $D^-$ was hurt by the error. We will consider all 4 possible cases:

\begin{enumerate}
    \item $t_1, t_2 \notin D^-$
    \item $t_2 \in D^-$
    \item $t_1 \in D^-, t_2 \notin D^+$
    \item $t_1 \in D^-, t_2 \in D^+$
\end{enumerate}

If $t_1, t_2 \notin D^-$ neither student is moved by {\em BoB} and there can be no additional blocking pairs created.

If  $t_2 \in D^-$ there will not be additional justified envy created by {\em BoB}, since {\em BoB} only moves students up their preference list. Informally, moving someone to a better school never makes them more envious of other students. Formally: if the new school assigned to $t_2$ is $s_3$ then  $s_3 >_{t2} s_2$. Because of this $s_n >_{t2} s_3$ implies $s_n >_{t2} s_2$, which means that the move did not create any additional schools to consider for justified envy. In all the previous schools, $t_2 >_{sn} t_1$ will remain unchanged.

If $t_1 \in D^-, t_2 \notin D^+$ this means that after running {\em BoB} both $t_1$ and $t_2$ are in the position they would get from an error-free DA.  This means that there is no justified envy between them by stability of DA. 

Lastly, if $t_1 \in D^-, t_2 \in D^+$, $t_2$ was helped by the error. From a similar argument to $t_2 \in D^-$, there is still no additional justified envy. The final position of  $t_1$, $t_2$ is now the equivalent of running DA and then improving $t_2$'s match. Because of this we have essentially taken DA and moved $t_2$ up its preference list, which we have shown cannot create justified envy.

\end{proof}

We have now shown that {\em BoB} does not create additional blocking pairs, however, it does not fix the blocking pairs that were originally created by the error. 

 Since {\em BoB} moves up the \indirect, the pathological cases can be quite disruptive. Recall from Section~\ref{sec:rejection} that the size of a rejection chain is difficult to give a hard bound on in a real system. In some pathological cases, such as when each school has a capacity of 1 and they all agree on the ranking of students, every single student is moved by this algorithm. This means that we cannot promise, or even estimate a reasonably small expansion, even if only one school made an error. Secondly, unlike the previous solution, this leads to expansion in innocent schools not possibly violating the harm focus goal. Nonetheless, this option can make sense when the main goal is to avoid direct harm to any student through the error, and if other schools are willing to participate. This is especially true if the error was caused by the clearinghouse system and harm focus is unimportant.

In most cases, we expect one {\em Stability Restoration} and {\em BoB} to be the best option. Choosing between them requires balancing the systemic upheaval caused by best-of-both against its advantage of actually making all the harmed students whole.    

\section{Additive Errors}

Additive errors model cases where some participants were unduly given high priority (ranking) in the match, giving them an advantage, for instance when a school $s_e$ inaccurately computes the score of some students, moving them higher on its preference list and penalizing other students. Formally, let $A$ be the group of students who would propose to $s_e$ under normal deferred acceptance and let $C$ be the students that $s_e$ would normally admit. We define an additive error as $s_e$ selecting $N$ students from $A /\ C$ uniformly at random and ranking them at the top of its preference list. Then DA is run with the modified preference list. 

In this type of error, each student added to the top of $s_e$'s list displaces a single other student who should have been accepted. This means that we assume $N \leq C$. In the extreme case of $N \geq C$, the admission function will admit none of the correct applicants and instead fill the seats at random from the rest of the applicant pool.

\subsection{Size of each group}
The \directHelp are the students that this algorithm moves up, and no students are \direct or \indirectHelp. 
As in the subtractive error case, the number of \indirect will depend on the rejection chain length (Section~\ref{sec:rejection}). 

The number of \Envy is more interesting than it was when the errors were subtractive. First, every one of the \directHelp will displace one student who would otherwise have been admitted. The maximum size of the additional Envious students is $(A-C-N)$ representing all the possible \Envy who are not otherwise accounted for, that is to say, all students who applied to the error school and were not admitted or pushed out. We claim in Table \ref{tab:expected} that almost all these students are in fact \Envy. To give an intuition, consider the case where $N=1$ and only one student is erroneously admitted. All students ranked above them, who are not admitted, will have justified envy. We then can expect around half of the other students to be ranked below them, since they are admitted at random and other students are just as likely to be ranked above them as below. This problem becomes much worse as more students are erroneously admitted, since to avoid justified envy, a student must rank below all the error students. The exact expectation is proportional to $\frac{N}{N+1}$ as proved in the appendix by a combinatorial argument based on the following summation.
$$
 \sum_{j=0}^{A-C-N} j \times \frac{\binom{j + N - 1}{N-1}}{\binom{A-S}{N}} = (A-C-N) \times \frac{N}{N + 1}
$$
As before, this applies when $(A-C-N) >=0$.

\subsection{Adjustments needed for additive errors}

The mitigation options for additive errors are very similar to those for subtractive error, but require some adjustment.
\begin{enumerate}
    \item Direct only
    \item Stability Restoration
    \item Best-of-Both
\end{enumerate}

Direct only focuses on the students who were displaced by the \directHelp is possible. This can be done by simply admitting all the students who were denied a seat as a direct result of the error students taking their spots. For any error, assuming that the size of the error is less than the size of the school, the expansion required will be at most $N$, the size of the error. This mitigation strategy however does have some issues. First of all, this group is not as clearly defined as before. Why should the students who are pushed out by the first step of a rejection chain have an advantage? Both the directly and indirectly affected students are harmed by their slots being taken unjustly by someone else. Unlike with regard to subtractive errors, it is not clear that these students are directly targeted by the error. This makes it harder to justify giving them an advantage just because in the original match they should have been matched to a school that made an error.

The best-of-both ({\em BoB}) strategy from the Subtractive Error scenario is another reasonable choice, although it can here as well create significant upheaval throughout the entire system if it is susceptible to long rejection chains. Note that the property of not increasing the number of \Envy is independent of the exact error and applies here as well.

Stability Restoration becomes a much less attractive option than it was when the error was subtractive. The number of \Envy creates a situation more complex than that of the subtractive error scenario. We cannot apply {\em Stability Restoration} since even small mistakes would require admitting significant portions of the entire applicant pool to $s_e$, regardless of the capacity of the school. If the error involves even as few as 4 students, no matter the size of $C$, $A$, and $N$, restoring stability requires admitting 80\% of the remaining applicant pool.

The main distinction between Additive and Subtractive Errors is that there are too many \Envy to help in the Additive case. This means that for most cases the decision becomes whether the decision makers want to consider \indirect, in which case they should use {\em BoB}, or only the beginning of the rejection chain, in which case they should use direct only.

\subsection{Near-Stable Expansion}

Because stability is not a reasonable option in additive errors without retracting offers, we suggest an alternative of near-stability. We define this as there should be no blocking pairs where $t_1$ is not one of the mistakenly accepted students. This definition is already met by the error case, and we can set an additional restriction that any error mitigation must maintain it.  

One mitigation strategy using this idea is {\em Near-Stable Expansion}. {\em Near-Stable Expansion} is similar to {\em Stable Expansion} from the Resource Reduction scenario. The affected students are any group of students harmed by the error. The students who were erroneously admitted to the system are not considered at all for stability evaluation. Instead, the new best rejected student  $Br_S$ for each school $S$ is the best uninvolved rejected student, ignoring all the error students and all the affected students. 

If the affected students are \direct, then this will lead to a similar result to Direct Only. Since all the \direct would have been accepted to $s_e$ in the original match, they will all get an offer to $s_e$ or a school they ranked above it. The difference is that some of the students may get an offer to a school they ranked above $s_e$ if they rank above the best rejected for that school. This algorithm offers the advantage of spreading the expansion throughout the system, at the expense of harm focus.  A similar algorithm can be made to spread the expansion of {\em BoB} using both \direct and \indirect as the affected students.

\section{Simulations}
\subsection{Experimental Settings}

We performed some simulations to show empirically the sizes of each group of students and how the proposed algorithms look in practice. The general design of the preference lists for the simulations is taken from \cite{hafalir2013effective} and \cite{kesten2019strategy}. The basic idea is that each student has a popularity score drawn from a normal distribution which all schools agree on. This score is added to an individual score particular to each (student, school) pair to get the student's final score. The schools then sort the students by their final score to get a ranking. The students sort the schools in the exact same way. This combination allows for correlated but distinct rankings for each agent. We add a list length drawn from a normal distribution, then schools are drawn uniformly until the list is full. We use equal weighting for the individual and universal scores.

In each simulation, there are 10 schools with 100 seats each and 900 students. The individual score is equally weighted with the overall score. For accuracy, each simulation is run 100 times and the average across all of them is taken.

\subsection{Resource Reduction}
For this case, we ran three experiments, each with a school with different popularity in the students' lists removed:  the top school in the system, the median-ranked school, and the lower-ranked school. As shown in Table \ref{tab:resourec_reduction}, the group sizes work out as expected, with the entire size of the school as \direct. The lower-ranked schools do not fill up in some of the simulations. This means that there will be fewer than expected \direct, and \Envy. The number of \Envy depends on whether the \direct would find a match in a system without the missing school, which as expected is very likely if they are the type of students who fill the top school and very unlikely if they are the type of students who fill the bottom school. As we expect, \Envy is almost the entire group of the best students, and much less, about $\frac{3}{4}$, of the lower-ranked ones.
\begin{table}[]

\begin{tabular}{|l|l|l|l|}
\hline
School Popularity       & \direct & \indirect & \Envy \\ \hline
Popular   & 100     & 0       & 97.99   \\ \hline
Median & 99.86   & 0       & 93.39   \\ \hline
Unpopular  & 41.25   & 0       & 27.11   \\ \hline
\end{tabular}
\caption{The average size of each group when one school is removed in a resource reduction error. The school removed is either the most desired (Popular) school in the system, a median school or the least desired (Unpopular)}
\label{tab:resourec_reduction}
\end{table}

{\em Stable Expansion} is effective at reducing the number of unmatched students after an error. In Figure~\ref{fig:stable_expansion}, we see that {\em Stable Expansion} successfully places most of the students in their second-choice school. Since most of these students' first-choice school has left the system, this is the best possible outcome for them. The expansion of resources is also quite reasonable. The average total number of additional seats was 86.2. The average maximum expansion in any one school across all the runs was 26.01. Even in the worst case across any of the 100 runs, the maximum single-school expansion was only 42 seats. This shows that Stable Expansion successfully spreads the burden of the extra seats across the different schools in the system in practice while maintaining good outcomes for the displaced students. 

Figure \ref{fig:stable_expansion} shows the placement of each student on their preference list before the error, after the error, and after Stable Expansion is run. The red bars show the students who are matched to the error school so that their path through error and mitigation is clear. Since the best school is removed, we see that many of the students in \direct had their first choice after the error, which means that no matter the mitigation many of them will end up worse off. In this case, Stable Expansion usually does as well as possible placing these students at their second choice. We see that some students are even better off than before the error, this happens when they themselves were one of the best rejected students at their favorite school, and are therefore able to attend when the school is expanded since no non-affected student outranks them.

The moved section is used to indicate the number of students who changed from their DA placement under each algorithm. The colors there show which direction they moved.
\begin{figure}
    \centering
    \includegraphics[width=\textwidth]{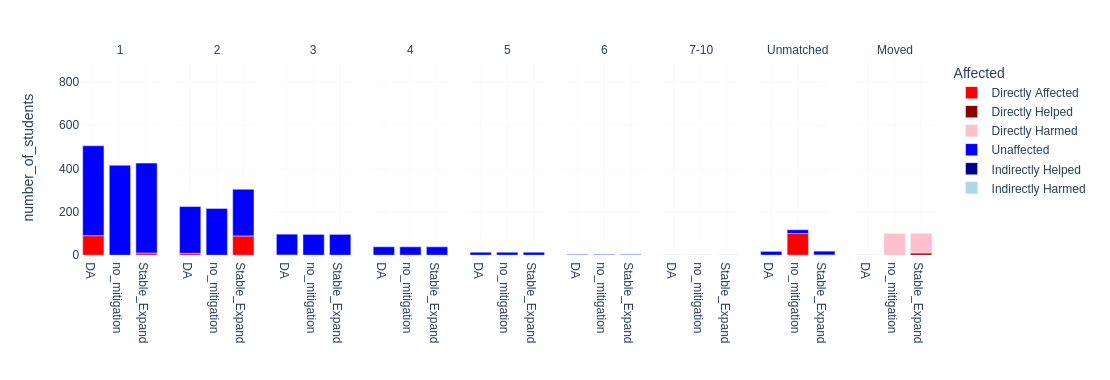}
    \caption{Affected and unaffected students change in outcome when the best school is removed, under the original condition (no removal), removal, and Stable expansion}
    \label{fig:stable_expansion}
\end{figure}
\subsection{Subtractive Errors}
In this experiment, we focused on a scenario where the top school lost applications. This will tend to be the most impactful type of error since highly rated students are more likely to start rejection chains. In our simulations, the applicant pool for the best school hovered around 320 seats. 

As we can see in Figure \ref{fig:subtractive_groups}, the predictions of the group sizes from Table~\ref{tab:expected} match the outcomes almost exactly. In this simulation, the number of \indirect is small relative to the number of \direct. This means that for this particular combination of preferences and capacities, best-of-both may be a reasonable option. As expected, we see that the number of \directHelp is lower bounded by the number of \direct. In this case the the number of \indirectHelp above the lower bound is about half as much as the number of \indirect. This indicates that the positive rejection chains are about half as long as the negative ones, We see that when the proportion of lost applications is small, roughly less than $\frac{1}{4}$, accommodating \Envy is reasonable. As the proportion of lost applications grows, this becomes less and less reasonable. Note that these results are independent of the rest of the system and only depend on the size of the error. For example, doubling the number of schools and students has no effect on the size of the groups. The only systemic change that has an effect is reducing the popularity of the error school to the point where they do not have enough applicants to fill, taking into account that some of the applications are lost. 

\begin{figure}
    \centering
    \includegraphics[width=\textwidth]{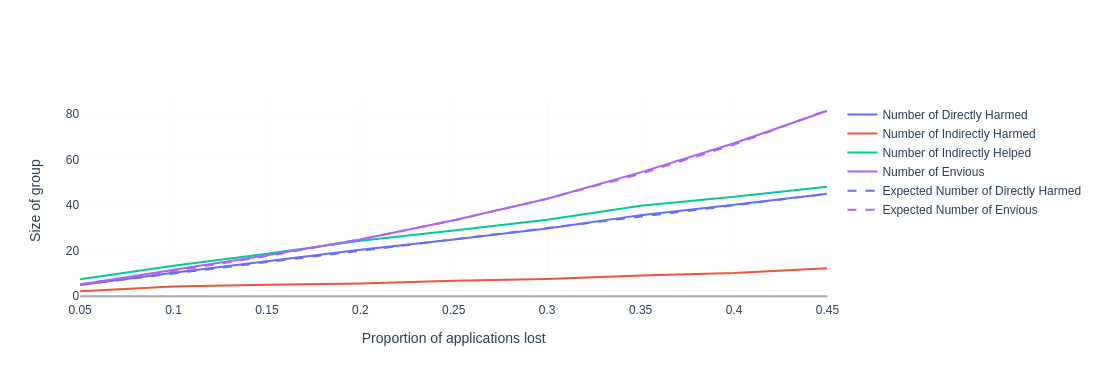}
    \caption{The actual vs expected group size as we vary the proportion of lost applications}
    \label{fig:subtractive_groups}
\end{figure}

In Figure \ref{fig:subtractive_mitigation} the effects of different mitigation strategies for a Substractive Error are shown. We see that all the strategies help the \direct shown in light red. There are a few \indirect not helped by stability restoration or direct only, shown in the last column in light blue. There are a significant number of \Envy helped only by stability restoration. Since, as we saw above, the number of \indirect is small, BoB mostly dominates direct only in this case. 

The moved section again shows the number of students who changed from their DA placement under each algorithm. For these algorithms, capacity is only added in the error school, even for the BoB algorithm for which this is not strictly true, the average max capacity is 48 seats which is almost identical to the number of students moved. For all the other algorithms, all moved students create a new seat.

\begin{figure}
    \centering
    \includegraphics[width=\textwidth]{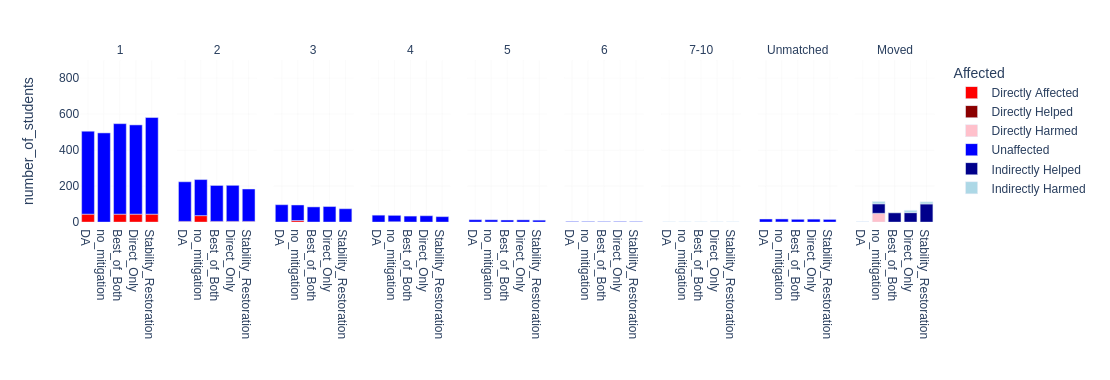}
    \caption{The effect of mitigation strategies when a subtractive error with $p=0.5$}
    \label{fig:subtractive_mitigation}
\end{figure}

\subsection{Additive Errors}

In this experiment, we simulate an additive error in the most popular school. In Figure \ref{fig:additive_groups}, we see that our estimate of Table~\ref{tab:expected} for the number of \Envy is quite accurate. Unlike with subtractive errors, the line here is not smooth since the number of \Envy itself changes with the number of applicants to $s_e$, which are randomly generated. We also see that the number of \indirect grows approximately linearly with the number of \directHelp, this means that each student in \directHelp starts a rejection chain of approximately the same length. Even our maximum error case of 20 students is only the equivalent of $p=0.2$ in the subtractive case. We limited Figure \ref{fig:additive_groups} to this error size since the number of \Envy grows quickly with the size of the error.

\begin{figure}
    \centering
    \includegraphics[width=0.9\textwidth]{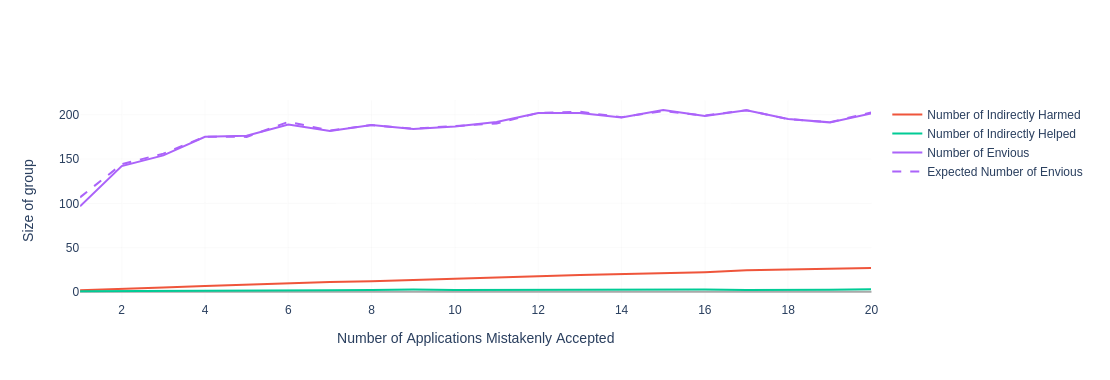}
    \caption{The actual vs expected group size as we vary the proportion of erroneously added students}
    \label{fig:additive_groups}
\end{figure}

In Figure \ref{fig:additive_mitigation}, we see the effect of a relatively small error of 10 students. As expected, {\em Stability Restoration} has a large number of students moved and extra capacity compared to the other options. However, since only 10 students were directly displaced and around 5 were indirectly displaced, options like Best-of-Both may be a good fit. As before the capacity expansion is approximately the same as the number of students moved. 
\begin{figure}
    \centering
    \includegraphics[width=\textwidth]{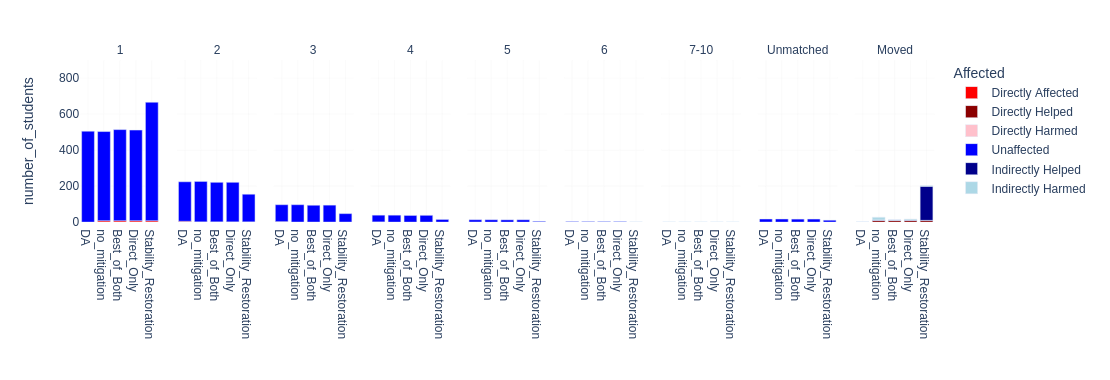}
    \caption{The effect of mitigation strategies with an Additive Error $N=10$}
    \label{fig:additive_mitigation}
\end{figure}

\section{Conclusion}
This paper shows a first step towards understanding and addressing the effects of errors, or changes in resources post-match, in school matching scenarios. We classify the effects of three types of errors inspired by real-world events on the students' admission outcomes. We provided estimates for the number of students harmed by different types of errors. We propose solutions to mitigate the detrimental impacts of errors on affected
parties. Our error mitigation strategies take into account desirable fairness properties and policy restrictions.

Our work can help inform policymakers on the design of compensatory processes that are fair to the wronged parties and assist legislators in providing a legal framework for compensating wronged parties.

\bibliographystyle{ACM-Reference-Format}
\bibliography{bibliography}

\appendix
\section{Full Proof of Expected Number of Envious Students Under Additive Error}
\label{sec:append}
Let A be the number of students who apply to the school during deferred acceptance, N be the number of \directHelp, and C be the capacity of the school. Let $B=A-C$ be the number of students eligible for a boost. We want to compute the expected number of \Envy when a school randomly boosts some unselected students. We know that exactly $N$ previously selected students will be jealous. This is because by boosting $N$ students we push out exactly $N$ of the top $C$ students directly. This is true no matter which students are boosted. For the remainder of the proof, we will omit the $N$ directly affected students and use $B$ instead of $A$.  We can easily account for these students at the end.  

Since we have countably many outcomes, the definition of expected value states this expectation will be:
$$
E(\#jealous) = \sum_{j=0}^{j=B-N}j \times P(\#jealous=j)
$$
The minimum number of jealous students is zero, if the the top N non-selected students are chosen. The maximum number of jealous students is everyone except the boosted students or $B-N$ if the lowest ranked student is boosted.

We can compute this number by realizing that there is a bijection between the number of jealous students and the rank of the lowest-ranked boosted student by the random admission function. If the lowest ranked admitted student has rank $N + j$, then there will be exactly $j$ students with justifiable envy. This is because every student ranked above the lowest ranked boosted student either admitted themselves or has justifiable envy of the worst ranked boosted student, and no students ranked below the worst ranked admitted student have justifiable envy. 

To compute the number of random selections that lead to exactly $j$ students with justifiable envy, we can use binomial coefficients. We do this by phrasing the problem as given that the the worst ranked student has rank $j + N$ how many ways can we select the $N-1$ other boosted students from the $j+N-1$ students ranked above the worst ranked admitted student. We can write this number as:
$$
\binom{j + N - 1}{N-1}
$$
Since in the error case all these selections are equally likely, the probability becomes this number divided by the total number of ways of selecting N students. We know we are not double-counting any groups since each term has a unique rank for the worst-ranked selected student. This gives:
$$
P(\#jealous=j) = \frac{\binom{j + N - 1}{N-1}}{\binom{B}{N}}
$$
Which means the expected value of $\#j$ is:
$$
 E(\#jealous) = \sum_{j=0}^{B-N} j \times \frac{\binom{j + N - 1}{N-1}}{\binom{B}{N}}
$$
This simplifies to:
$$
 E(\#jealous) = \sum_{j=0}^{B-N} j \times \frac{\binom{j + N - 1}{N-1}}{\binom{B}{N}} = (B-N) \times \frac{N}{N + 1}
$$
To show this we will need some binomial coefficient identities.
First, we have one based on the factorial definition of binomial coefficients:
\begin{equation} \label{lem1}
j \times \binom{j + N -1}{N-1} = N \times \binom{j + N -1}{j-1}
\end{equation}
This can be easily seen by writing out the factorial explicitly.
\begin{equation}\begin{split}
j \times \binom{j + N -1}{N-1} &= N \times \binom{j + N -1}{j-1}\\
j \times \frac{(j + N - 1)!}{(N-1)!(j)!} &= N \times \frac{(j + N - 1)!}{(j-1)!(N!)}\\
\frac{(j + N - 1)!}{(N-1)!(j-1)!} &= \frac{(j + N - 1)!}{(j-1)!(N-1)!}\\
\end{split}
\end{equation}
We will also use two traditional properties of binomial coefficients:
\begin{equation} \label{p1}
\binom{n+1}{k+1} = \binom{n}{k} + \binom{n}{k+1}
\end{equation}
and 
\begin{equation} \label{p2}
\binom{n}{k} = \binom{n}{n-k}
\end{equation}
We will need one more binomial identity
\begin{equation} \label{lem2}
\binom{n}{k+1} = \binom{n}{k}\frac{n-k}{k+1}
\end{equation}
This comes from the traditional binomial identity:
\begin{equation}
\binom{n}{k} = \binom{n}{k-1}\frac{n-k+1}{k}
 \end{equation}
We then substitute $k+1$ for $k$.

Lastly we will be using the hockey stick identity to collapse the sum which states (in one form):
\begin{equation} \label{p3}
\sum_{Q=0}^{B-N} \binom{Q + N}{Q} = \binom{B+1}{B-N}
\end{equation}
This can be proven with induction.

Now we are ready to begin the algebra, starting with our original claim.
$$
(B-N) \times \frac{N}{N + 1} = \sum_{j=0}^{B-N} j \times \frac{\binom{j + N - 1}{N-1}}{\binom{B}{N}}
$$
When $j=0$ the term clearly evaluates to 0 so we can remove that case.
$$(B-N) \times \frac{N}{N + 1} = \sum_{j=1}^{B-N} j \times \frac{\binom{j + N - 1}{N-1}}{\binom{B}{N}}$$
Now we apply \ref{lem1} to rewrite the binomial coefficient:
$$
(B-N) \times \frac{N}{N + 1} = \sum_{j=1}^{B-N} N \times \frac{\binom{j + N - 1}{j-1}}{\binom{B}{N}}
$$
Next pull the constants out of the summation.
$$
(B-N) \times \frac{N}{N + 1} = \frac{N}{\binom{B}{N}} \times \sum_{j=1}^{B-N} \binom{j + N - 1}{j-1}
$$
Now we will substitute $Q=j-1$, remembering to adjust the bounds of the summation:
$$
(B-N) \times \frac{N}{N + 1} = \frac{N}{\binom{B}{N}} \times \sum_{Q=0}^{B-N-1}\binom{Q + N}{Q}
$$
Next, we will rewrite the summation by both adding and subtracting $\binom{B}{B-N}$.
$$
(B-N) \times \frac{N}{N + 1} = \frac{N}{\binom{B}{N}} (\sum_{Q=0}^{B-N} \binom{Q + N}{Q} - \binom{B}{B-N})
$$
Now we apply \ref{p2}
$$
(B-N) \times \frac{N}{N + 1} = \frac{N}{\binom{B}{N}} (\sum_{Q=0}^{B-N} \binom{Q + N}{Q} - \binom{B}{N})
$$
Next, we simplify
$$
(B-N) \times \frac{N}{N + 1} = \frac{N}{\binom{B}{N}} (\sum_{Q=0}^{B-N} \binom{Q + N}{Q}) - N
$$
Now apply the hockey-stick identity from equation \ref{p3}
$$
(B-N) \times \frac{N}{N + 1} = \frac{N}{\binom{B}{N}} (\binom{B+1}{B-N}) - N
$$
Now apply \ref{p2} again
$$
(B-N) \times \frac{N}{N + 1} = \frac{N}{\binom{B}{N}} (\binom{B+1}{N+1}) - N
$$
Next, we apply the recursive definition of binomial coefficients \ref{p1}
$$
(B-N) \times \frac{N}{N + 1} = \frac{N}{\binom{B}{N}} (\binom{B}{N}  + \binom{B}{N+1}) - N
$$
Simplify again
$$
(B-N) \times \frac{N}{N + 1} = \frac{N}{\binom{B}{N}} (\binom{B}{N+1}) + N - N
$$
$$
(B-N) \times \frac{N}{N + 1} = \frac{N}{\binom{B}{N}} (\binom{B}{N+1})
$$
Now we apply equation \ref{lem2}
$$
(B-N) \times \frac{N}{N + 1} = \frac{N}{\binom{B}{N}} \times \binom{B}{N}\frac{B-N}{N+1}
$$
Simplify one last time and we have our answer
$$
(B-N) \times \frac{N}{N + 1} = (B-N) \times \frac{N}{N + 1}
$$

$$
(B-N) \times \frac{N}{N + 1} = \sum_{j=0}^{B-N} j \times \frac{\binom{j + N - 1}{N-1}}{\binom{B}{N}}
$$
Remember that we need to replace B with $A-C$ and add back the $N$ directly affected students to get the final answer of:
$$
N + (A-C-N) \times \frac{N}{N + 1}
$$

\end{document}